\def\bb0{{\mathbb{0}}}
\def\bb{{\mathbf{b}}}
\def\b0{{\mathbf{0}}}
\def\b1{{\mathbf{1}}}
\def\bbE{{\mathbb{E}}}
\def\cA{\mathcal{A}}
\def\sf0{{\mathsf{0}}}
\def\nn{\nonumber}
\tikzstyle{every picture}+=[remember picture]
\newtheorem{lemma}{Lemma}
\newtheorem{theorem}{Theorem}
\newtheorem{definition}{Definition}
\newtheorem{remark}{Remark}
\newtheorem{assumption}{Assumption}
\begin{document}
\title{Online energy efficient packet scheduling for a common deadline with and without energy harvesting}

\author{\IEEEauthorblockN{Aditya Deshmukh}

\IEEEauthorblockA{Department of Electrical Engineering\\
Indian Institute of Technology Madras\\
Chennai 600036\\
Email: ee12b070@ee.iitm.ac.in}

\and

\IEEEauthorblockN{Rahul Vaze}

\IEEEauthorblockA{School of Technology and Computer Science\\
Tata Institute of Fundamental Research\\
Homi Bhabha Road, Mumbai 400005\\
Email: vaze@tcs.tifr.res.in}}

\maketitle

\begin{abstract}
The problem of online packet scheduling to minimize the required conventional grid energy for transmitting a fixed number of packets given a common deadline is considered. The total number of packets arriving within the deadline is known, but the packet arrival times are unknown, and can be arbitrary. The proposed algorithm tries to finish the transmission of each packet assuming all future packets are going to arrive at equal time intervals within the left-over time. The proposed online algorithm is shown to have competitive ratio that is logarithmic in the number of packet arrivals. The hybrid energy paradigm is also considered, where in addition to grid energy, energy is also available via extraction from renewable sources. The objective here is to minimize the grid energy use. A suitably modified version of the previous algorithm is also shown to have competitive ratio that is logarithmic in the number of packet arrivals.\end{abstract}

\IEEEpeerreviewmaketitle

\section{Introduction}

Minimizing energy consumption under delay constraints is a classical resource allocation problem, which has been very well studied in literature \cite{uysal2002energy, miao2006optimal, zafer2005calculus, chen2007energy, agarwal2008structural, berry2013optimal, lee2013asymptotically, neely2011opportunistic, neely2007optimal, srivastava2010energy}. Shorter the delay, larger is the energy required, and clearly, there is a tradeoff between these two fundamental quantities. The problem is even more important for the modern setup, where devices are increasing becoming smaller with limited battery sizes, and where effective energy utilization is fundamental for efficient network operation by maximizing the node lifetime and consequently expanding the network lifetime.

The energy-delay tradeoff has been studied in variety of settings. For example, for an AWGN channel, \cite{uysal2002energy} studies the packet scheduling problem for minimizing transmission energy, where a fixed number of packets arrive successively in time, and have to be transmitted before a common deadline. 
Assuming that packet arrival times are known ahead of time (called the offline setting), an optimal algorithm is derived in \cite{uysal2002energy}.
A more general problem than studied in \cite{uysal2002energy},  is where each packet has an individual hard deadline \cite{miao2006optimal, zafer2005calculus}. With individual hard deadlines, the optimal algorithms are known in the offline setting \cite{miao2006optimal, zafer2005calculus}, or when the packet arrival times are i.i.d. and follow a given distribution \cite{chen2007energy}.  

Similar results are available for fading/time-varying channels, where typically an average delay constraint is considered \cite{agarwal2008structural, gong2013optimal, srivastava2010energy}, and the problem is to minimize the average power consumption. A hard deadline result is also known from \cite{lee2013asymptotically}. A more specific case of energy-delay tradeoff with 'small' delay constraint has been addressed in \cite{berry2013optimal} for the fading channels.

The energy-delay tradeoff problem becoming even more challenging, when in addition to the conventional grid energy there is an additional energy source that is powered by renewable energy, popularly called as energy harvesting (EH). The problem of minimizing transmission time/delay when only an EH source is available has been well studied in literature. For example, for the AWGN channel, an optimal offline algorithm has been derived for a single transmitter-receiver pair in \cite{Yang2012}, whose online counterpart that is $2$-competitive for arbitrary energy arrivals has been found in \cite{VazeCompetitive}. With only an EH source, the problem of multiple packet transmissions with individual hard deadlines has been studied recently in \cite{shan2015discrete}. Similarly, for the fading channel, an optimal offline algorithm has been derived for a single transmitter-receiver pair in \cite{ozel2011transmission}, while for broadcast and MAC channels in \cite{antepli2011optimal, yang2012optimal, ozel2012optimal, Antepli}.

The problem where both the grid and the EH energy are concurrently available is relatively less well-studied and is inherently a hard problem. Starting with \cite{cui2012delay}, some progress has been made in \cite{gong2013optimal}, where optimal offline (under some conditions on battery size etc.), and two heuristic online algorithms are derived. The general online problem has remained unsolved. 

In this paper, when only grid energy is available, we first consider the classical packet scheduling problem for minimizing transmission energy \cite{uysal2002energy}, where a fixed number of packets $P$ arrive successively in time, and have to be transmitted before a common deadline $T$. In contrast to the offline case studied in \cite{uysal2002energy}, in this paper, we consider the more realistic and challenging {\it online} scenario, where information about packet arrivals is known only causally. Moreover, unlike \cite{chen2007energy}, we do not make any stochastic assumptions on the inter-arrival times for the packets, and consider the arbitrary packet inter-arrival time case, where even an adversary can choose them. Thus, our model is the most general one, and is applicable for any setting where packet inter-arrival times are time-variant or difficult to estimate etc.

To characterize the performance of an online algorithm, we consider the metric of competitive ratio that is defined as the ratio of the energy used by the online algorithm and the offline optimal algorithm, maximized over all packet inter-arrival time sequences. The competitive ratio is a worst case guarantee on the performance of an online algorithm and is independent of modeling assumptions.  

We assume that at $t=0$, the number of packets $P$ and common deadline $T$ are known. We show in Remark \ref{rem:unboundedcr}, that if $P$ is not known ahead of time, then the competitive ratio of any online algorithm is unbounded.
Let on arrival of a new packet at time $t$, the number of packets left to arrive be $P(t)$. Then the main idea behind the proposed algorithm $\mathsf{ON}$ is that it assumes that the future $P(t)$ packets are going to arrive at equal intervals in the left-over time of $T-t$, and attempts to finish transmitting the current packet in time $\frac{T-t}{P(t)+1}$. Since the future inter-arrival time sequence is unknown and arbitrary, algorithm $\mathsf{ON}$ may have to idle, i.e. it can finish transmitting the current packet before the next packet arrives, in which case it has to use more energy than required by the optimal offline  algorithm, that never idles. We show that the competitive ratio of the $\mathsf{ON}$ algorithm is no more than $1+ \log P$, where $P$ is the total number of packets. 

Note that the derived competitive ratio is independent of the common deadline time $T$, and the number of bits contained in each packet. Ideally, we would also have liked the competitive ratio to be independent of $P$ and be a constant, however, for many similar scheduling and load-balancing {\it online} problems, the best results on competitive ratio also scale logarithmically in the number of packets/users (equivalent quantity of interest) \cite{azar1994line, aspnes1997line, gobel2014online}. We would like to note that $1+ \log P$ is the best bound we can show theoretically for the $\mathsf{ON}$ algorithm, however, 
 simulations suggest that the competitive ratio of $\mathsf{ON}$ is far better than that and is close to $1$ for the examples considered. 
 
Our derived results are for the most general input setting for this classical problem, and to the best of our knowledge no online algorithms with provable guarantees on the competitive ratios are known in the literature.

Next, we generalize the energy arrival paradigm, and consider the same problem of minimizing energy for transmitting multiple packets given a common deadline, when energy from both the grid and the EH source is available. In this hybrid energy source scenario, if the energy harvesting profile is arbitrary, then it is easy to see that no online algorithm can have bounded competitive ratio, since if large amount of EH energy arrives close to the deadline, the offline algorithm will use it intelligently, while the online algorithm may not. Thus, for this case, one has to make the assumption that the EH energy profile is stochastic, and energy arrivals are identically distributed across time. The inter-arrival times are still allowed to be arbitrary.
We propose a natural greedy extension of the $\mathsf{ON}$ algorithm, that uses the EH energy as quickly as possible and for as long as possible while keeping the same transmission times for each packet as prescribed by the $\mathsf{ON}$ algorithm. Similar to the only grid energy case, we show that the competitive ratio of our algorithm in this hybrid energy scenario is bounded by $c(1+\log P)$ for constant $c >1$. Using numerical results, we conclude that the competitive ratio of the proposed algorithm  is actually very close to $1$ for the considered examples, and it is expected to do well in the online setting.

\section{Model}
We consider a single transmitter-receiver pair, that wants to communicate $P$ packets that arrive within time 
$[0,T)$, with a common deadline of $T$ for all $P$ packets, i.e., all packets should be delivered by time $T$. The number of bits in each packet is assumed to be equal to $B$.
The transmitter is connected to two sources of energy through which it extracts power: i) the grid (conventional), and  ii) a battery that is replenished by a energy harvester that is powered by a renewable energy source. 
Naturally, there is a cost associated to the grid energy usage, whereas renewable energy is available at zero cost. Thus, the objective is to minimize the total grid energy used to transmit the $P$ packets by common deadline time $T$.

We use the Shannon formula $B = t \log \left(1+\frac{E}{t}\right)$ to find the energy needed to send $B$ bits in time duration $t$, as 
\begin{equation}\label{eq:energyusage} 
f(t) = t(2^{B/t}-1). \footnote{More generally with noise power $N_0$ and bandwidth $W$, $f(t) = N_0Wt(2^{B/(tW)}-1))$.}
\end{equation}
The rate of power transfer is denoted as $R = \frac{E}{t}$.

We assume that the first packet arrives at $t=0$, and the inter-arrival time between the $i^{th}$ and the 
$(i+1)^{th}$ packet is given by $a_{i}$. Thus, a packet arrival sequence is represented as a sequence : 
$$A_{P}=(a_{1},a_{2},a_{3},...,a_{P-1}, a_P),$$ where, $a_{i}\geq0$ and $\sum\limits^{P-1}_{i=1}a_{i} < T$ and $\sum\limits^{P}_{i=1}a_{i} = T$. We have introduced the extra time $a_P$ that accounts for the time difference between the last ($P^{th}$) packet arrival at time $\sum_{i=1}^{P-1}a_{i}$ and $T$. See Fig. \ref{fig:on-optdescription} for an illustration.
Let $\Delta_{P}^{T}$ be the set of sequences representing packet inter-arrival times with number of packets equal to $P$, i.e., 
$$\Delta_{P}^{T}=\left\{ \left(a_{1},a_{2},a_{3},...,a_{P}\right)\ | \ a_{i}\geq0,\sum\limits _{i=1}^{P}a_{i} =T\right\}.$$
Since $P$ and $T$ are fixed, we will use just $A$ and $\Delta$ instead of $A_{P}$ and $\Delta_{P}^{T}$ for simplicity.

Following prior work and to keep the system complexity low, we assume that bits from different packets cannot be transmitted at the same. Thus, packets are transmitted one after another in a sequential fashion. 
\begin{definition} For packet $i$, let $s_i\ge \sum_{j=1}^{i-1}a_{j}$ and $f_i$ be the start and the finish time of transmission of packet $i$, respectively. Then we define $t_i=f_i-s_i$ to be the transmission time for packet $i$. \end{definition}
\begin{definition}
With packet transmission times $t_i$, the total energy used by an online algorithm $\text{ALG}$ to transmit $P$ packets is given by\footnote{For ease of exposition we have indexed packets as $\{0,\ldots, P-1\}$}
$$E_{\text{ALG}}\left(A\right)= \sum_{i=0}^{P-1} f(t_i).$$ To distinguish the optimal offline algorithm $\mathsf{OPT}$ from any online algorithm, let $d_i$ be the packet transmission times of $\mathsf{OPT}$, and total energy used by $\mathsf{OPT}$ to transmit the $P$ packets be
$$E_{\mathsf{OPT}}\left(A\right)= \sum_{i=0}^{P-1} f(d_i).$$
\end{definition}

\begin{definition}
The competitive ratio of algorithm $\text{ALG}$ is defined as :
$$\mu_{\text{ALG}}=\max_{A \in \Delta}\frac{E_{\text{ALG}}(A)}{E_{\mathsf{OPT}}(A)},$$
where $\mathsf{OPT}$ is the  optimal offline algorithm.
\end{definition}
The competitive ratio is the worst case ratio of the cost of the online algorithm and the optimal offline algorithm over all possible inter-arrival sequences, and has been used extensively to quantify the performance of online algorithms.

We first consider the scenario where no energy harvesting is available, and the objective is to minimize the grid energy usage. In prior work \cite{uysal2002energy}, this problem has been addressed in the offline scenario, where the inter-arrival time sequence $A$ is revealed ahead of time, non-causally. We consider a more realistic online scenario, where $A$ is revealed causally, and where $A$ can be arbitrary with no distribution information. 
To keep the problem non-degenerate, however, we assume  that the number of packets $P$ is known ahead of time. In Remark \ref{rem:unboundedcr}, we show that if $P$ is unknown, even if $P$ can take only two values $\in \{1,2\}$, the competitive ratio is unbounded for any online algorithm. 

\begin{remark}\label{rem:unboundedcr} If the number of packets $P$ is not known ahead of time, then we show that the competitive ratio of any online algorithm is unbounded. Consider the case when $P$ is either $1$ or $2$, but that is unknown to the online algorithm ahead of time. Also let $T=1$. If $P=2$, then let $a_1= \frac{1}{2}$. Then until time $t=\frac{1}{2}$, any online algorithm does not know whether $P=1$ or $P=2$. Without this information, let any online algorithm spend energy $E_1$ until time $t= \frac{1}{2}$ and transmit $B_0\le B$ bits of the first packet. 

Using the energy function $f(t) = t(2^{B/t}-1)$, if $P=1$, i.e., no packet arrives at time $t=\frac{1}{2}$, then the total energy used by the online algorithm is given by  $\frac{1}{2}(2^{2 B_0}-1) + \frac{1}{2}(2^{2 (\max\{B-B_0,0\})}-1)$, while if $P=2$, it is 
$\frac{1}{2}(2^{2 B_0}-1) + \frac{1}{2}(2^{2 (2B-B_0)}-1)$, since from time $t=\frac{1}{2}$ onwards it has to transmit $2B-B_0$ bits in the left-over time interval $[\frac{1}{2},1]$. 

Moreover, since the optimal offline algorithm knows the exact value of $P$ ahead of time, the total energy it spends is $(2^{B}-1)$ if $P=1$, and $(2^{2B}-1)$ if $P=2$, since it transmits first packet completely by time $\frac{1}{2}$ knowing that $P=2$. Thus, the competitive ratio of any online algorithm $\text{ALG}$ is lower bounded by 
\begin{eqnarray*}
\mu_{\text{ALG}} &\ge& \min_{B_0} \max\left\{ \frac{\frac{1}{2}(2^{2 B_0}-1) +  \frac{1}{2}(2^{2 (B-B_0)}-1)}{2^{B}-1},  \right.\\
&&\left.\frac{\frac{1}{2}(2^{2 B_0}-1) + \frac{1}{2}(2^{2 (2B-B_0)}-1)}{2^{2B}-1}\right\}.
\end{eqnarray*}
It is easy to see that for any value of $B_0$ that the online algorithm chooses, the competitive ratio grows exponentially in $B$.
\end{remark}

\section{No Energy Harvesting}
In this section, we consider the case when no renewable source is available, and the objective is to minimize the grid energy usage for transmitting the $P$ packets. Supposing that the inter-arrival time sequence $A$ is known ahead of time, an optimal offline algorithm has been derived in \cite{uysal2002energy}, which we present here for completeness sake, as well for an easier presentation of our online algorithm.

\subsection{Optimal Offline Algorithm}
\begin{algorithm}
	\caption{$\mathsf{OPT}$}
	initialize $k_{0}=0$\;
	\For{$j := 0$ to $P-1$}
	{
	$m_{j+1}=\underset{k\in\{1,2,3,...,P-k_{j}\}}{\max}\left\{\frac{1}{k}\sum\limits^{k}_{i=1}a_{k_{j}+i}\right\}$
 
	$k_{j+1}=\underset{k\in\{1,2,3,...,P-k_{j}\}}{\max}\left\{k:\frac{1}{k}\sum\limits^{k}_{i=1}a_{k_{j}+i}=m_{j+1}\right\}$
	}
	\For{$i := 0$ to $P-1$}
	{
		$d_{i}=m_{j}$ such that $k_{j-1}<i\leq k_{j}$\;
	}
	
	return $(d_{0},d_{1},...,d_{P-1})$\;
\end{algorithm}
The optimum offline algorithm $\mathsf{OPT}$ for minimizing the total energy for transmitting $P$ packets with a common deadline is given by Algorithm 1  \cite{uysal2002energy}. From \eqref{eq:energyusage}, it is clear that transmitting at a slower rate (power), minimizes the energy required. Since sequence $A$ is known ahead of time, 
the $\mathsf{OPT}$ algorithm makes sure that the transmitter never idles by transmitting at rate (slower/faster) depending on the next packet arrival times (large/short). 

The offline algorithm computes the largest average $m_{1}$ of partial sums of $a_{i}$'s starting from index 
$i=1$ to $P$, and sets the first transmission time, i.e. packet transmission finish time, equal to 
$m_{1}$ for each of the first $k_{1}$ number of packets, where $k_{1}$ is the highest index such that the average of partial sums of $a_{i}$'s is $m_{1}$. It then repeats the same procedure after index $k_{1}$. The algorithm never idles and $i^{th}$ packet is transmitted immediately after the transmission of the  $(i-1)^{th}$ packet ends.

 The packet transmission times output by $\mathsf{OPT}$, $d_{i}$'s, are such that $\sum\limits _{i=0}^{P-1}d_{i}=T$, since $A$ is known ahead of time, and the algorithm can ensure the non-idling property.
 
 Also, since the transmission of $i^{th}$ packet cannot start before its arrival,  we also have that, 
\begin{equation}
\label{ineq:causal}
\sum\limits _{i=1}^{\ell}d_{i}\geq\sum\limits _{i=1}^{\ell}a_{i},
\end{equation}
for any $0\le \ell \le P-1$.
Moreover, another useful property of  $\mathsf{OPT}$ \cite{uysal2002energy} is that, 
\begin{equation}
\label{ineq:tau}
d_{i}\geq d_{i+1}\;\forall i=0, 1, \dots P-1,
\end{equation}
i.e., the transmission times decrease with the index of the packets, which is intuitive, since otherwise we could stretch the transmission time and decrease the energy usage. We will make use of \eqref{ineq:tau} repeatedly while analyzing the competitive ratio of the $\mathsf{ON}$ algorithm.


Next, we present an important property of the  $\mathsf{OPT}$ algorithm that will be useful for the analysis of our online algorithm.
\begin{lemma}\label{lem:OPTdectime} \cite{uysal2002energy} If the inter-arrival time sequence $A$ is such that $a_i \ge a_{i+1}, \ \forall \ i=1,\dots, P-1$, 
then, $d_{i-1}( \mathsf{OPT}) = a_i$, i.e., the optimal offline algorithm finishes each packet exactly at the arrival time of the next packet. 
\end{lemma}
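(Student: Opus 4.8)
The plan is to prove the identity by directly analyzing the construction in Algorithm~1, using the hypothesis that the arrival sequence is non-increasing, and then inducting on the number of packets. The heart of the matter is understanding how the running averages $\bar a_k := \tfrac{1}{k}\sum_{i=1}^{k} a_i$ behave when $a_1 \ge a_2 \ge \cdots \ge a_P$, since these are precisely the quantities $\mathsf{OPT}$ maximizes.

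First I would show that, for a non-increasing sequence, the running averages are themselves non-increasing in $k$. Writing $S_k = \sum_{i=1}^k a_i$, a short computation gives
\[
\bar a_k - \bar a_{k+1} = \frac{S_k - k\,a_{k+1}}{k(k+1)},
\]
and the numerator is non-negative because each of the $k$ terms composing $S_k$ is at least $a_{k+1}$ by monotonicity. Consequently the maximum running average over $k \in \{1,\dots,P\}$ is attained at $k=1$, so in the first iteration of $\mathsf{OPT}$ we obtain $m_1 = \bar a_1 = a_1$.

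Next I would pin down the block length $k_1$, defined in the algorithm as the largest $k$ achieving $\bar a_k = m_1 = a_1$. Since $a_i \le a_1$ for every $i$, the equality $\bar a_{k_1} = a_1$ (that is, $S_{k_1} = k_1 a_1$) forces $a_1 = a_2 = \cdots = a_{k_1}$. Hence $\mathsf{OPT}$ assigns $d_{i-1} = m_1 = a_1 = a_i$ for every $i \in \{1,\dots,k_1\}$, which is exactly the claimed identity restricted to the first block.

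Finally, I would close the argument by induction on $P$. After the first block, $\mathsf{OPT}$ restarts the identical maximization on the tail $a_{k_1+1},\dots,a_P$, which is again non-increasing (a contiguous suffix of a non-increasing sequence) and strictly shorter; the induction hypothesis then yields $d_{i-1}=a_i$ for all $i > k_1$, and the base case $P=1$ is immediate since $d_0 = a_1 = T$. The only points requiring care are the tie-breaking rule that selects $k_1$ as the \emph{largest} maximizer (needed so that the block boundary lands exactly at the end of the initial run of equal values) and keeping the $0$-based indexing of the $d_i$ consistent with the $1$-based indexing of the $a_i$. I expect the monotonicity computation for $\bar a_k$ to be the only substantive step, with everything else following formally.
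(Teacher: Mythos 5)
Your proof is correct and takes essentially the same route as the paper's: the paper simply asserts that, by the definition of $\mathsf{OPT}$, a non-increasing arrival sequence "trivially" yields $d_{i-1}=a_i$ in each iteration, and your argument supplies precisely the details behind that assertion (monotonicity of the running averages $\bar a_k$, the tie-breaking rule forcing $a_1=\cdots=a_{k_1}$ on the first block, and the induction over successive blocks). There are no gaps; yours is just the fully worked-out version of what the paper leaves implicit.
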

\begin{proof} Note that the  $\mathsf{OPT}$ algorithm computes the largest averages of partial sums of $a_i$'s in each round. For the case when $a_i \ge a_{i+1}$, then in each iteration of $\mathsf{OPT}$,  trivially, $d_{i-1}=a_i$, by the definition of the $\mathsf{OPT}$.
\end{proof}

Now, we describe our online algorithm called $\mathsf{ON}$ and then derive its competitive ratio. 

\subsection{Online Algorithm $\mathsf{ON}$}\label{sec:ON}
In light of Remark \ref{rem:unboundedcr}, we assume that at $t=0$, the number of packets $P$ and common deadline $T$ are known. 
Let on arrival of a new packet at time $t$, the number of packets left to arrive be $P(t)$. Then the main idea behind the algorithm is that it assumes that the future $P(t)$ packets are going to arrive at equal intervals in the left-over time of $T-t$, and attempts to finish transmitting the current packet in time $\frac{T-t}{P(t)+1}$ (transmission time). Since the inter-arrival time sequence $A$ is unknown and arbitrary, this algorithm may have to idle, i.e. it can finish transmitting the current packet before the next packet arrives, in which case it has to use more energy than required by the $\mathsf{OPT}$ algorithm, that never idles. We later show that the competitive ratio of $\mathsf{ON}$ is no more than $1+\log P$. 

A more formal description of the algorithm is as follows.
The transmitter starts sending the first packet at time $0$ with transmission time $\frac{T}{P}$. If the second packet arrives before the finish time of the first packet, the second packet is added to the queue and waits for the current transmission to complete. Once the first packet transmission is complete at time $\frac{T}{P}$, the second packet is transmitted with transmission time of 
$\frac{T}{P}$. Similarly, for the $i^{th}$ packet: if packet arrives before time $t=(i-1)\frac{T}{P}$, its added to the queue and transmitted starting from the time at which the $(i-1)^{th}$ packet's transmission got completed with transmission time $\frac{T}{P}$. If suppose, the $j^{th}$ packet arrives after the finish time of the $(j-1)^{th}$ packet. Then for the time between the arrival of the $j^{th}$ packet and the finish time of the $(j-1)^{th}$ packet, the transmitter has no packets in the queue and is said to be 'idle', and does not consume any power. 
In such a case, at the time of the arrival of the $j^{th}$ packet at time $\sum_{i=1}^{j}a_{i}$, we update :
$$T \leftarrow T - \sum_{i=1}^{j}a_{i},$$
$$P \leftarrow P - j.$$
The algorithm now repeats the same procedure with the new $T$ and $P$, and outputs packet transmission times $t_i, i=j,\dots, P-1$.

With algorithm $\mathsf{ON}$, the transmission time for the $i^{th}$ packet is given by : $t_{i}=\min\left(t_{i-1},\, \frac{T-\sum\limits _{n=1}^{i}a_{n}}{P-i}\right)$ or equivalently,
$$t_{i}=\min\left(\frac{T}{P},\frac{T-a_{1}}{P-1},\frac{T-a_{1}-a_{2}}{P-2},...,\frac{T-\sum\limits _{n=1}^{i}a_{n}}{P-i}\right).$$

Let the ratio of the remaining time and the number of packets yet to arrive at the $\ell^{th}$ packet arrival be 
$x_{\ell}=\frac{T-\sum\limits _{n=1}^{\ell}a_{n}}{P-\ell}\;\;\forall\,0\le \ell\leq P-1$. Then $t_{i}$ can be expressed as : 
\begin{equation}
\label{eq:x}
t_{i}=\underset{\ell\leq i}{\min}\left(x_{\ell}\right).
\end{equation}

\begin{algorithm}
	\caption{$\mathsf{ON}$}
	initialize $t_{0}=\frac{T}{P}$\;
	\For(){$i := 0$ to $P-2$}
   	{
	$t_{i+1}=\min\left(t_{i},\frac{T-\sum\limits^{i}_{n=1}a_{n}}{P-i}\right)$\; 
    }
    return $(t_0, t_{1},t_{2},t_{3},...,t_{P-1})$\;
\end{algorithm}

Compared to the $\mathsf{OPT}$ algorithm, $\mathsf{ON}$ will pay a penalty, if the inter-arrival times are much larger than its assumption of them being equally spaced. The penalty arises because in such cases, $\mathsf{ON}$ has to idle, and consequently transmit packets in shorter time consuming larger energy compared to the $\mathsf{OPT}$ algorithm. We make this intuition concrete in Theorem \ref{th1}, where we show that the worst case input that maximizes the competitive ratio for $\mathsf{ON}$ is of the type when the inter-arrival times are decreasing, i.e., $a_i \ge a_{i+1}, i=0,\dots,P-1$. When $a_i \ge a_{i+1}$, $\mathsf{ON}$ has to idle after finishing every packet transmission. To  see this, with $a_i \ge a_{i+1}$, $a_1>T/P$, hence $\mathsf{ON}$ idles from time $T/P$ (where it finishes the first packet transmission) till $a_1$. From time $a_1$ onwards,   $\mathsf{ON}$ treats $a_1$ as time $0$ and restarts the process, and hence has to idle after finishing each packet transmission. 

In Fig. \ref{fig:on-optdescription}, we give a concrete example of transmission times set by the $\mathsf{ON}$ algorithm and the offline optimal algorithm $\mathsf{OPT}$ for a particular sequence $A$. The gaps in time-line for $\mathsf{ON}$ are because of its possible idling which $\mathsf{OPT}$ completely avoids. 
\begin{figure}[h]
	\begin{center}
		\includegraphics[height=2.75in]{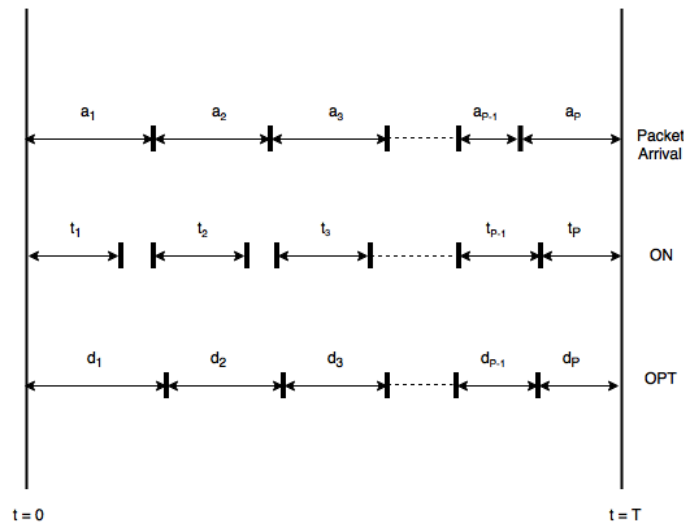}
		\caption{\sl Illustration of $\mathsf{ON}$ and $\mathsf{OPT}$ for a particular inter-arrival sequence $A$. \label{fig:on-optdescription}} 
	\end{center}
\end{figure}

\subsection{Competitive Ratio Analysis of $\mathsf{ON}$}
We now show that the worst case inter-arrival sequence ($a_{i}$'s) for algorithm $\mathsf{ON}$ is such that $a_{i}\geq a_{i+1}$. This condition essentially implies that the $\mathsf{ON}$ algorithm has to idle after finishing transmission of every single packet. To prove this, we show that given any inter-arrival sequence $A$, we can construct another feasible inter-arrival sequence $A'$ for which $a_{i}\geq a_{i+1}$ and $E_{\mathsf{OPT}}(A) = E_{\mathsf{OPT}}(A')$, while the energy spent by $\mathsf{ON}$ increases in the latter case, i.e, $E_{\mathsf{ON}}(A) \le E_{\mathsf{ON}}(A')$.


\begin{lemma}
\label{lem1}
Let $A^{\prime}$ be the inter-arrival sequence that is output (packet transmission times) of the optimal offline algorithm $\mathsf{OPT}$ with inter-arrival sequence $A$, i.e., $A^{\prime}\leftarrow {\mathsf{OPT}}(A)$.
Then, we have
\begin{eqnarray}\label{eq:ONincdec}
E_{\mathsf{ON}}(A^{\prime})&\geq &E_{\mathsf{ON}}(A),\\\label{eq:OPTnochange}
E_{\mathsf{OPT}}(A^{\prime})&=&E_{\mathsf{OPT}}(A).
\end{eqnarray}

\end{lemma}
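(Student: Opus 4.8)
The plan is to verify the two displayed relations separately, disposing of \eqref{eq:OPTnochange} first (it is the shorter) and then establishing \eqref{eq:ONincdec} through a \emph{pointwise} comparison of the transmission times that $\mathsf{ON}$ produces on $A$ and on $A'$. Write $a'_i := d_{i-1}$ for the entries of $A' = \mathsf{OPT}(A)$, so that $\sum_{i=1}^{P} a'_i = \sum_{j=0}^{P-1} d_j = T$ and hence $A'\in\Delta$. By the monotonicity property \eqref{ineq:tau} of $\mathsf{OPT}$ we have $a'_i = d_{i-1}\ge d_i = a'_{i+1}$, so $A'$ is a non-increasing inter-arrival sequence; this is precisely the regime in which the lemma asserts that $\mathsf{ON}$ is hurt while $\mathsf{OPT}$ is unaffected. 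I will write $t_i(A)$ and $t_i(A')$ for the $\mathsf{ON}$ transmission times on the two sequences.

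For \eqref{eq:OPTnochange}, I would simply apply Lemma~\ref{lem:OPTdectime} to the non-increasing sequence $A'$: it guarantees that $\mathsf{OPT}$ run on $A'$ finishes each packet exactly at the arrival of the next, so its transmission times equal $a'_i = d_{i-1}$, which are exactly the transmission times $\mathsf{OPT}$ already used on $A$. The two runs therefore use identical transmission times and hence identical energy, giving $E_{\mathsf{OPT}}(A') = E_{\mathsf{OPT}}(A)$.

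The substance is \eqref{eq:ONincdec}. Denote by $x'_\ell = \frac{T-\sum_{n=1}^{\ell}a'_n}{P-\ell} = \frac{1}{P-\ell}\sum_{j=\ell}^{P-1} d_j$ the ratios \eqref{eq:x} for $A'$; note $x'_\ell$ is the average of the tail $d_\ell,\dots,d_{P-1}$. Since the $d_j$ are non-increasing, the leading term $d_\ell$ dominates the average of the shorter tail $d_{\ell+1},\dots,d_{P-1}$, so $x'_\ell \ge x'_{\ell+1}$; thus $x'_\ell$ is non-increasing in $\ell$ and the minimum in \eqref{eq:x} is attained at $\ell=i$, i.e.\ $t_i(A') = x'_i$. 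Next I compare $x_\ell$ with $x'_\ell$ at a common index: the causality inequality \eqref{ineq:causal}, $\sum_{j=0}^{\ell-1} d_j \ge \sum_{n=1}^{\ell} a_n$, gives $T-\sum_{n=1}^{\ell} a_n \ge T-\sum_{j=0}^{\ell-1} d_j$, that is $x_\ell \ge x'_\ell$ for every $\ell$. Combining this with the monotonicity $x'_\ell \ge x'_i$ for all $\ell\le i$ yields $x_\ell \ge x'_\ell \ge x'_i$, whence
$$ t_i(A) = \min_{\ell\le i} x_\ell \;\ge\; x'_i \;=\; t_i(A'). $$
Because $f$ is decreasing (transmitting over a longer time costs less energy), $f(t_i(A)) \le f(t_i(A'))$ for each $i$, and summing over $i=0,\dots,P-1$ gives $E_{\mathsf{ON}}(A) \le E_{\mathsf{ON}}(A')$.

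The step I expect to be delicate is the pointwise domination $t_i(A)\ge t_i(A')$: the naive bound $x'_i \le x_i$ (the $\ell=i$ case alone) does not suffice, because $t_i(A)$ is a minimum over \emph{all} $\ell\le i$; the argument only closes because causality upgrades the inequality to $x_\ell\ge x'_\ell$ at every index while the tail-average $x'_\ell$ is monotone, so the smallest of the $x_\ell$ still cannot fall below $x'_i$. The only other point requiring care is the index bookkeeping between the $0$-indexed transmission times $d_j$ and the $1$-indexed inter-arrivals $a_n$, together with the boundary case $x_0 = x'_0 = T/P$.
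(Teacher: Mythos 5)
Your proof is correct and follows essentially the same route as the paper: the causality inequality \eqref{ineq:causal} gives the pointwise domination $x_\ell \ge x'_\ell$, hence shorter $\mathsf{ON}$ transmission times on $A'$ and larger energy, while \eqref{eq:OPTnochange} follows from \eqref{ineq:tau} together with Lemma~\ref{lem:OPTdectime}, exactly as in the paper. The only difference is that you additionally establish monotonicity of the tail averages $x'_\ell$ to pin down $t_i(A') = x'_i$ (this is the content of the paper's Lemma~\ref{lem2}, proved separately for use elsewhere); that step is harmless but redundant here, since $x_\ell \ge x'_\ell$ for all $\ell \le i$ already yields $\min_{\ell\le i} x_\ell \ge \min_{\ell\le i} x'_\ell = t_i(A')$ directly.
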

\begin{remark}
Note that $A$ is a $P$-length sequence, but the last element of $A$ is auxiliary, since packet $0$ starts at time $0$, and only the first $P-1$ elements represent the inter-arrival times of $P-1$ other packets. The output $\{d_0, \dots, d_{P-1}\}$ of ${\mathsf{OPT}}(A)$ is of length $P$, with transmission times for the successive $P$ packets. So when we consider output of ${\mathsf{OPT}}(A)$ as an input to $\mathsf{ON}$  or $\mathsf{OPT}$, we mean that $a_{i+1}= d_{i}$ for $i=0,\dots, P-1$ and $a_P = T- \sum_{i=1}^{P-1} a_{i}$.
\end{remark}
\begin{remark}
From \eqref{ineq:tau}, we have that elements of $A'$ are such that $d_0\ge d_1 \ge \dots \ge d_{P-1}$. Thus, Lemma \ref{lem1} shows that with decreasing inter-arrival sequences, the competitive ratio increases for $\mathsf{ON}$.
\end{remark}
Now we present the proof of Lemma \ref{lem1}.
\begin{proof} 
We prove this by showing that the packet transmission times for $\mathsf{OPT}$ remain the same with $A$ or $A'$, whereas they decrease for $\mathsf{ON}$ with $A'$ in comparison to $A$. 
Let, $A^{\prime}=(d_{0},d_{1}, \dots,d_{P-1})$. Note that $d_{i}\geq0$ and $\sum\limits^{P-1}_{i=0}d_{i}=T$, hence as explained in Remark \ref{lem1}, $A^{\prime}$ is a valid packet arrival sequence, and therefore $A^{\prime}\in\Delta$.

Let, $\left(t_{i}\right)_{i=1}^{P}$ and $\left(t_{i}^{\prime}\right)_{i=1}^{P}$ be the packet transmission times set by $\mathsf{ON}$ for $A$ and $A^{\prime}$, respectively.

We first prove that the energy consumed by $\mathsf{ON}$ acting on $A^{\prime}$ is greater with respect to $A$ by showing that the transmission times for each packet decrease in $A^{\prime}$, and hence the total energy increases with respect to $A$.
From \eqref{eq:x}, we know that for packet inter-arrival time sequence $A$, with $\mathsf{ON}$,
\begin{eqnarray*}
x_{\ell}&=&\frac{T-\sum\limits _{n=1}^{\ell}a_{n}}{P-\ell},\\
t_{i}&=&\underset{l\leq i}{\min}(x_{\ell}),
\end{eqnarray*}
and,
\begin{eqnarray*}
x_{\ell}^{\prime}&=&\frac{T-\sum\limits _{n=1}^{\ell}d_{n}}{P-\ell},\\
t_{i}^{\prime}&=&\underset{\ell\le i}{\min}(x_{\ell}^{\prime}).\\
\end{eqnarray*}
Therefore, using \eqref{ineq:causal}, we have
\begin{eqnarray*}
\frac{T-\sum\limits_{n=1}^{\ell}a_{n}}{P-\ell}&\geq&\frac{T-\sum\limits_{n=1}^{\ell}d_{n}}{P-\ell}, \;\,\forall \ell.\\
\end{eqnarray*}
Hence, by definition of $x_{\ell}$, $x_{\ell} \ge x_{\ell}^{\prime}, \ \forall \ell
$.
Therefore, 
\begin{eqnarray*}
\underset{l\leq i}{\min}(x_{l})&\geq &\underset{l\leq i}{\min}(x_{l}^{\prime})\;\;\forall i,\\
t_{i}&\geq &t_{i}^{\prime}\;\;\forall i,
\end{eqnarray*}
Since the energy function \eqref{eq:energyusage} is inversely proportional to transmission time, 
\begin{eqnarray*}
f(t_{i}^{\prime})&\geq &f(t_{i})\;\;\forall i,\\
\sum\limits^{P-1}_{i=0}f(t_{i}^{\prime})&\geq&\sum\limits^{P-1}_{i=0}f(t_{i}),\\
 E_{\text{\text{on}}}(A^{\prime})&\geq &E_{\text{\text{on}}}(A).
\end{eqnarray*}
Thus, we have proved \eqref{eq:ONincdec}.

From \eqref{ineq:tau}, we know that $A' = \{d_0, \dots, d_{P-1}\}$ is such that $d_i\ge d_{i+1}$. Therefore, from Lemma \ref{lem:OPTdectime}, we have that with $\mathsf{OPT}$, the  transmission times remain same for both packet sequences $A$ and $A^{\prime}$. Hence $\mathsf{OPT}$ uses identical energy for $A$ or $A^{\prime}$, proving \eqref{eq:OPTnochange}.

\end{proof}

Now, we show that the worst case packet sequence that maximizes the competitive ratio of $\mathsf{ON}$ is such that $a_{i}$'s are decreasing.
\begin{theorem}
\label{th1}
Let $\cA = \left\{A \in \Delta : \mu_{\mathsf{ON}}(A) \ge \mu_{\mathsf{ON}}(A'), \right.$ $\left. \ \forall \ A' \in \Delta\right\}$ be the set of inter-arrival time sequences that have the worst competitive ratio.
Let $\Delta_{D}\subseteq \Delta$  be such that
$$\Delta_{D}=\left\{ (a_{1},a_{2},\dots,a_{P}) \ | \ a_{i}\geq a_{i+1}, \sum_{i=1}^P a_i=T\right\},$$
$\forall i\in\{1,...,P\}$.
Then 
$$\cA \cap \Delta_D \ne \phi.$$
\end{theorem}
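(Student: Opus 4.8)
The plan is to obtain Theorem \ref{th1} as an almost immediate consequence of Lemma \ref{lem1} together with the monotonicity property \eqref{ineq:tau} of $\mathsf{OPT}$. The central observation is that the map $A \mapsto A' \bydef \mathsf{OPT}(A)$ is a transformation that (i) leaves the offline cost unchanged, since $E_{\mathsf{OPT}}(A') = E_{\mathsf{OPT}}(A)$ by \eqref{eq:OPTnochange}; (ii) does not decrease the online cost, since $E_{\mathsf{ON}}(A') \ge E_{\mathsf{ON}}(A)$ by \eqref{eq:ONincdec}; and (iii) always produces a \emph{decreasing} sequence, because the entries of $A' = (d_0,\dots,d_{P-1})$ satisfy $d_0 \ge d_1 \ge \dots \ge d_{P-1}$ by \eqref{ineq:tau}, so that $A' \in \Delta_D$. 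Combining (i) and (ii), this transformation can only increase the competitive ratio of a sequence, i.e. $\mu_{\mathsf{ON}}(A') \ge \mu_{\mathsf{ON}}(A)$, while forcing the output to lie in the decreasing cone $\Delta_D$.

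Concretely, I would first argue that $\cA \ne \phi$, i.e. that the maximum defining the worst case is actually attained. This follows because $\Delta$ is a compact set (a scaled simplex pinned down by $\sum_i a_i = T$) and the ratio $\mu_{\mathsf{ON}}(A) = E_{\mathsf{ON}}(A)/E_{\mathsf{OPT}}(A)$ is continuous in $A$: the $\mathsf{ON}$ transmission times are continuous through the $\min$ in \eqref{eq:x}, the $\mathsf{OPT}$ times are continuous through the $\max$ in Algorithm 1, the energy function $f$ in \eqref{eq:energyusage} is continuous, and the denominator is bounded away from zero. Hence $\mu_{\mathsf{ON}}$ attains its supremum on $\Delta$ and $\cA$ is nonempty.

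Next I would pick any worst-case sequence $A^\star \in \cA$ and set $A'^\star \bydef \mathsf{OPT}(A^\star)$. By point (iii) above, $A'^\star \in \Delta_D$. By points (i)--(ii) and the definition of the per-sequence ratio, $\mu_{\mathsf{ON}}(A'^\star) \ge \mu_{\mathsf{ON}}(A^\star)$. Since $A^\star$ is a global maximizer and $A'^\star \in \Delta$ is feasible, we also have $\mu_{\mathsf{ON}}(A^\star) \ge \mu_{\mathsf{ON}}(A'^\star)$, forcing equality. Therefore $A'^\star$ is itself a maximizer, so $A'^\star \in \cA$, and in particular $A'^\star \in \cA \cap \Delta_D$, which is the claim.

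The proof contains essentially no computational burden, because Lemma \ref{lem1} already performs all of the analytic work; the only genuine technical point to get right is the nonemptiness of $\cA$, i.e. the existence of an optimizer. I expect this to be the main (and only) obstacle. If one prefers to avoid invoking compactness and continuity, an alternative is to work directly with the supremum: take any maximizing sequence $A^{(n)}$ with $\mu_{\mathsf{ON}}(A^{(n)}) \to \sup_{A} \mu_{\mathsf{ON}}(A)$, transform each to $A'^{(n)} = \mathsf{OPT}(A^{(n)}) \in \Delta_D$ with $\mu_{\mathsf{ON}}(A'^{(n)}) \ge \mu_{\mathsf{ON}}(A^{(n)})$, and extract a convergent subsequence whose limit lies in the closed set $\Delta_D$ and attains the supremum. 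Either route establishes that a decreasing worst-case sequence exists, justifying the subsequent restriction of the competitive-ratio analysis of $\mathsf{ON}$ to inputs in $\Delta_D$.
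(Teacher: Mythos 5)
Your proposal is correct and follows essentially the same route as the paper: apply the transformation $A \mapsto \mathsf{OPT}(A)$ to a worst-case sequence, then use Lemma \ref{lem1} together with the monotonicity property \eqref{ineq:tau} to conclude the transformed sequence lies in $\cA \cap \Delta_D$. The only difference is that you explicitly justify the nonemptiness of $\cA$ (attainment of the maximum), a point the paper's proof implicitly assumes.
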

Theorem \ref{th1} implies that at least one of the worst inter-arrival sequences belongs to set $\Delta_{D}$.

\begin{proof} Let $A \in \cA$. Then consider $$A^{new}\leftarrow \mathsf{OPT}\left(A \right),$$ i.e., $A^{new}$ is  the output of the 
$\mathsf{OPT}$ given the input $A \in \cA$ for $\mathsf{ON}$. 
Note that $A^{new}\in \Delta_{D}$ from \eqref{ineq:tau}.
Using Lemma \ref{lem1}, we have, 
$E_{\mathsf{ON}}\left(A^{new}\right)\geq E_{\mathsf{ON}}\left(A \right)$, while $E_{\mathsf{OPT}}\left(A^{new}\right)=E_{\text{\text{off}}}\left(A \right)$ from Lemma \ref{lem:OPTdectime}.

Hence,
$$\frac{E_{\mathsf{ON}}(A^{new})}{E_{\mathsf{OPT}}(A^{new})}\geq\frac{E_{\mathsf{ON}}(A )}{E_{\mathsf{OPT}}(A )},$$ and in particular  
$$\frac{E_{\mathsf{ON}}(A^{new})}{E_{\mathsf{OPT}}(A^{new})}\geq\frac{E_{\mathsf{ON}}(A' )}{E_{\mathsf{OPT}}(A' )},$$ for any $A' \in \Delta$ by the definition of $\cA$. 
 Therefore $A^{new}$ also belongs to $\cA$, and $$\cA \cap \Delta_D \ne \phi.$$
\end{proof}

We now prove a useful result about packet transmission times set by the $\mathsf{ON}$ algorithm when $a_{i}$'s are decreasing.

\begin{lemma}
\label{lem2}
If the inter-arrival time sequence $A=\left(a_{i}\right)_{i=1}^{P}\,\in \Delta_{D}$, then 
$\{t_{i}\}_{i=0}^{P-1}=\mathsf{ON}(A)$ is such that 
\begin{eqnarray*}
t_{i}=\frac{T-\sum\limits^{i}_{n=1}a_{n}}{P-i}. 
\end{eqnarray*}
\end{lemma}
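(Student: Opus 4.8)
The plan is to work directly from the characterization \eqref{eq:x}, which expresses the $\mathsf{ON}$ transmission times as $t_i = \min_{\ell \le i} x_\ell$, where $x_\ell = \frac{T - \sum_{n=1}^\ell a_n}{P-\ell}$. The claimed identity $t_i = x_i$ is equivalent to the statement that this minimum is always attained at the largest admissible index, i.e. that $x_i \le x_\ell$ for every $\ell \le i$. The cleanest route is to prove the stronger fact that the entire sequence $(x_\ell)_{\ell=0}^{P-1}$ is non-increasing whenever $A \in \Delta_D$, after which the result for $t_i$ is immediate.

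The key first step is to reinterpret $x_\ell$. Since $\sum_{n=1}^P a_n = T$, the numerator satisfies $T - \sum_{n=1}^\ell a_n = \sum_{n=\ell+1}^P a_n$, so
$$x_\ell = \frac{1}{P-\ell}\sum_{n=\ell+1}^P a_n$$
is exactly the arithmetic mean of the tail $a_{\ell+1}, \dots, a_P$. This viewpoint makes the monotonicity transparent: passing from $x_\ell$ to $x_{\ell+1}$ drops the leading tail element $a_{\ell+1}$, which, because $A$ is non-increasing, is the largest of the terms being averaged, so the average cannot go up.

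To make this quantitative I would write $x_\ell = \frac{a_{\ell+1} + (P-\ell-1)\,x_{\ell+1}}{P-\ell}$ and subtract, obtaining
$$x_\ell - x_{\ell+1} = \frac{a_{\ell+1} - x_{\ell+1}}{P-\ell}.$$
Since $A \in \Delta_D$ forces $a_{\ell+1} \ge a_n$ for all $n \ge \ell+1$, and $x_{\ell+1}$ is the average of $a_{\ell+2}, \dots, a_P$, we get $a_{\ell+1} \ge x_{\ell+1}$, hence $x_\ell \ge x_{\ell+1}$ for every $0 \le \ell \le P-2$. Therefore $x_0 \ge x_1 \ge \dots \ge x_{P-1}$, and consequently $\min_{\ell \le i} x_\ell = x_i$, which combined with \eqref{eq:x} gives the stated formula $t_i = \frac{T - \sum_{n=1}^i a_n}{P-i}$.

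The argument is short and has no genuine obstacle; the only point requiring care is the tail-average reinterpretation of $x_\ell$, since it is precisely what converts the decreasing hypothesis on $A$ into monotonicity of the $x_\ell$. One should also check the boundary index $\ell = P-2$, where the recursion degenerates to $x_{P-1} = a_P$, but this is consistent with the tail-average formula and requires no separate treatment. Everything else follows directly from the definition \eqref{eq:x}.
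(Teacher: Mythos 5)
Your proof is correct and follows essentially the same route as the paper: both arguments show that for $A \in \Delta_D$ the sequence $(x_\ell)$ is non-increasing---the paper by substituting the bound $a_{\ell+1} \ge T'/P'$ (i.e., the leading term dominates the tail average) into the expression for $x_{\ell+1}$, you by the equivalent averaging recursion $x_\ell - x_{\ell+1} = \frac{a_{\ell+1}-x_{\ell+1}}{P-\ell} \ge 0$---and then conclude $t_i = \min_{\ell \le i} x_\ell = x_i$. The tail-average rewriting makes the monotonicity slightly more transparent, but it is the same key idea and the same conclusion.
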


\begin{proof}
Recall that for $A=\left(a_{i}\right)_{i=1}^{P}\,\in \Delta_{D}$, we have,
$a_{1}\geq a_{2}\geq \dots \geq a_{P}$.
Therefore, from \eqref{eq:x},
$$t_{i}=\underset{l\leq i}{\min}(x_{\ell}),$$ where $x_{\ell}=\frac{T-\sum\limits _{n=1}^{\ell}a_{n}}{P-\ell}$.
Let $T^{\prime}=T-\sum\limits _{n=1}^{\ell}a_{n}$ and $P^{\prime}=P-\ell$.
Hence, $x_{\ell}=\frac{T^{\prime}}{P^{\prime}}$, and consider $x_{\ell+1}=\frac{T-\sum\limits _{n=1}^{\ell+1}a_{n}}{P-\ell-1}=\frac{T^{\prime}-a_{\ell+1}}{P^{\prime}-1}$.

Note that $ a_{\ell+1}\geq a_{\ell+2}\geq...\geq a_{P-1} \ge a_{P}$.
Hence, $\left(P-\ell\right)a_{\ell}\geq\sum\limits _{n=\ell+1}^{P}a_{n}=T-\sum\limits _{n=1}^{\ell}a_{n}, \ P^{\prime}a_{\ell+1}\geq T^{\prime}$ or equivalently, $a_{\ell+1}\geq\frac{T^{\prime}}{P^{\prime}}$.
As a result, $$x_{\ell+1}=\frac{T^{\prime}-a_{\ell+1}}{P^{\prime}-1}\leq\frac{T^{\prime}-\frac{T^{\prime}}{P^{\prime}}}{P^{\prime}-1}=\frac{T^{\prime}}{P^{\prime}}=x_{\ell}.$$
Hence, $t_{i}=\underset{\ell\leq i}{\min}\:\left(x_{\ell}\right)=x_{i}=\frac{T-\sum\limits _{n=1}^{i}a_{n}}{P-i}$.

\end{proof}

\subsection{Competitive Ratio Computation}
We are now ready to compute an upper bound on the competitive ratio of the $\mathsf{ON}$ algorithm, by making use of Theorem \ref{th1}, that states that the worst case arrival sequence for $\mathsf{ON}$ is when $a_i \ge a_{i+1}$.

\begin{theorem}
\label{thm:main}
The competitive ratio of the $\mathsf{ON}$ algorithm is upper bounded by 
$$\mu_{\mathsf{ON}} \le 1+\log\left(P\right).$$
\end{theorem}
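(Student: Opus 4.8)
The plan is to let the three established lemmas do the structural work, so that the whole analysis collapses to one convexity estimate followed by a harmonic-sum bound. First I would invoke Theorem~\ref{th1} to restrict attention to worst-case sequences $A\in\Delta_D$, i.e. $a_1\ge a_2\ge\cdots\ge a_P$: since some worst-case sequence lies in $\Delta_D$, it suffices to bound $E_{\mathsf{ON}}(A)/E_{\mathsf{OPT}}(A)$ for such $A$. On $\Delta_D$ both algorithms have fully explicit transmission times. Lemma~\ref{lem2} gives $t_i=\bigl(T-\sum_{n=1}^{i}a_n\bigr)/(P-i)=\frac{1}{P-i}\sum_{n=i+1}^{P}a_n$, the arithmetic mean of the remaining inter-arrival times, while Lemma~\ref{lem:OPTdectime} gives $d_i=a_{i+1}$, so that $E_{\mathsf{OPT}}(A)=\sum_{i=0}^{P-1}f(a_{i+1})=\sum_{j=1}^{P}f(a_j)$.

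The crux is then to compare $\sum_i f(t_i)$ with $\sum_j f(a_j)$, and the guiding observation is that $t_i$ is a \emph{mean} of the $a_n$'s, which invites Jensen's inequality. I would first record that the energy function $f(t)=t(2^{B/t}-1)$ from \eqref{eq:energyusage} is convex on $(0,\infty)$, since a direct computation gives $f''(t)=(B\ln 2)^2\,2^{B/t}/t^3>0$. Convexity then yields, for each $i$,
\begin{equation*}
f(t_i)=f\!\left(\frac{1}{P-i}\sum_{n=i+1}^{P}a_n\right)\le \frac{1}{P-i}\sum_{n=i+1}^{P}f(a_n).
\end{equation*}

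Summing over $i$ and interchanging the order of summation is where the logarithmic factor emerges. Writing $k=P-i$, the double sum becomes $\sum_{k=1}^{P}\frac{1}{k}\sum_{n=P-k+1}^{P}f(a_n)$, and swapping the two sums weights each term $f(a_n)$ by a tail of the harmonic series, $\sum_{k=P-n+1}^{P}1/k\le \sum_{k=1}^{P}1/k=H_P$. Using the standard bound $H_P\le 1+\log P$ together with $f\ge 0$, I would conclude
\begin{equation*}
E_{\mathsf{ON}}(A)\le \sum_{n=1}^{P}f(a_n)\sum_{k=P-n+1}^{P}\frac{1}{k}\le H_P\sum_{n=1}^{P}f(a_n)\le (1+\log P)\,E_{\mathsf{OPT}}(A),
\end{equation*}
which is exactly the claimed inequality.

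I expect the routine-but-essential steps to be the convexity check for $f$ and careful bookkeeping in the summation interchange; the genuinely load-bearing idea is recognizing $t_i$ as an arithmetic mean, so that convexity converts the per-packet comparison into a clean term-by-term bound and the harmonic sum appears on its own. The only mild subtlety is the base of the logarithm: the harmonic bound actually gives $H_P\le 1+\ln P$, and since $\ln P\le\log_2 P$ the conclusion holds whether $\log$ denotes the natural or the base-$2$ logarithm.
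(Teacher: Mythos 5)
Your proof is correct, but it takes a genuinely different route from the paper's. The paper argues by induction on $P$: it takes a worst-case sequence $A_{k+1}^{*}\in\cA(k+1)\cap\Delta_{D}(k+1)$, drops the first packet to obtain a $k$-packet sequence that is again in $\Delta_{D}$, and uses Lemma \ref{lem2} and Lemma \ref{lem:OPTdectime} to show that both $\mathsf{ON}$ and $\mathsf{OPT}$ schedule the remaining $k$ packets exactly as they would in the $k$-packet problem; the induction hypothesis then covers those packets, while the first packet contributes at most $f\left(\frac{T}{k+1}\right)/\bigl((k+1)f\left(\frac{T}{k+1}\right)\bigr)=\frac{1}{k+1}$ by convexity of $f$, and $\frac{1}{k+1}\le\log(k+1)-\log k$ closes the induction. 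You rely on the same three structural ingredients (Theorem \ref{th1}, Lemma \ref{lem2}, Lemma \ref{lem:OPTdectime}) and the same convexity of $f$, but you avoid induction entirely: recognizing $t_i$ as the arithmetic mean of the remaining $a_n$'s, you apply Jensen term by term and then exchange the order of summation so that each $f(a_n)$ picks up a tail of the harmonic series. At bottom the two arguments perform the same count --- the paper's telescoping sum $1+\sum_{k=1}^{P-1}\frac{1}{k+1}$ is exactly your $H_P$ --- but your version is shorter, makes the slightly sharper bound $\mu_{\mathsf{ON}}\le H_P$ (indeed the non-uniform weights $\sum_{k=P-n+1}^{P}\frac{1}{k}$) explicit, and concentrates the convexity into a single Jensen step instead of spreading it across induction stages. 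Your closing remark on the logarithm's base is also consistent with the paper, whose inequality $\log(k+1)-\log k\ge\frac{1}{k+1}$ presumes the natural logarithm.
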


\begin{proof}
We prove the Theorem via induction on the number of packets $P$.
For ease of exposition, we index the inter-arrival sequence $A$ by the the number of packets it contains, i.e., $A = A_k$ if the number of packets in $A$ are $k$. Using Theorem \ref{th1}, we will only consider inter-arrival sequences belonging to $\Delta_D$.
Consider $P=1$, where the first packet is available at time $0$, and no more packets arrive thereafter. Hence both the $\mathsf{ON}$ and the $\mathsf{OPT}$ algorithms use the same energy to transmit one packet, and 
$\mu_{\text{\text{on}}}\left(A_{1}\right)=1$.

Now, assume that the result holds for any sequence $A_k$ of $k$ packets, i.e.,
$\mu_{\mathsf{ON}}\left(A_{k}\right)\leq1+\log\left(k\right)$,
and consider a sequence $A_{k+1}$ of $k+1$ packets.

Let $\cA(k+1) = \left\{A \in \Delta(k+1) : \mu_{\mathsf{ON}}(A) \ge \mu_{\mathsf{ON}}(A')\right.$ $\left. \forall \ A' \in \Delta(k+1)\right\}$, where 
$$\Delta(k+1) = \left\{ \left(a_{1},a_{2},\dots,a_{k}, a_{k+1}\right)|a_{i} \ge 0,\sum\limits _{i=1}^{k+1}a_{i} =T\right\},$$ i.e., all possible inter-arrival sequences with $k+1$ packets. Recall that $a_{k+1}$ is auxiliary since packet $0$ arrives at time $0$.

Let $A_{k+1}^{*} \in \cA(k+1) \cap \Delta_{D}({k+1})$, where 
$\Delta_D(k+1) \subseteq \Delta(k+1)$ with $a_{i}\ge a_{i+1}, \ \forall \ i$. The set $\cA(k+1) \cap \Delta_{D}({k+1})\neq\phi$ from Theorem \ref{th1}. 

Consider the output of $\mathsf{ON}$ and $\mathsf{OPT}$, if the input inter-arrival sequence is $A_{k+1}^{*}$, i.e.,
let
\begin{align}\label{eq:k+1ontime}
\left(t_{i}(k+1)\right)_{i=0}^{k}&\leftarrow \mathsf{ON}\left(A_{k+1}^{*}\right)\\\label{eq:k+1OPTtime}
\left(d_{i}(k+1)\right)_{i=0}^{k}&\leftarrow \mathsf{OPT}\left(A_{k+1}^{*}\right),
\end{align}
where we have made explicit that the algorithm is working with $k+1$ packets by indexing the packet transmission times $t_i$ and $d_i$, with the number of packets as  $t_{i}(k+1)$, and $d_{i}(k+1)$, respectively.

Since $A_{k+1}^{*} = \left(a^*_{1},a^*_{2},a^*_{3},...,a^*_{k+1}\right)$ is such that $a^*_{i}\ge a^*_{i+1}, \ \forall \ i$, we consider the new (sub)-sequence, $A^*_{k} = \{a^*_{2},a^*_{3},...,a^*_{k+1}\}$ of $k$ packets, where 
$T = \sum_{i=2}^{k+1} a^*_i$. Hence with $A^*_{k}$, the $0^{th}$ packet arrives at time $0$, the first packet arrives at time $a_2^*$ and so on, and $a^*_{k+1}$ is the auxiliary time, i.e. after all $k$ packets have arrived till $T$. See Fig. \ref{fig:proof}, for a pictorial description of the construction.
\begin{figure}[h]
	\begin{center}
		\includegraphics[width=3.5in]{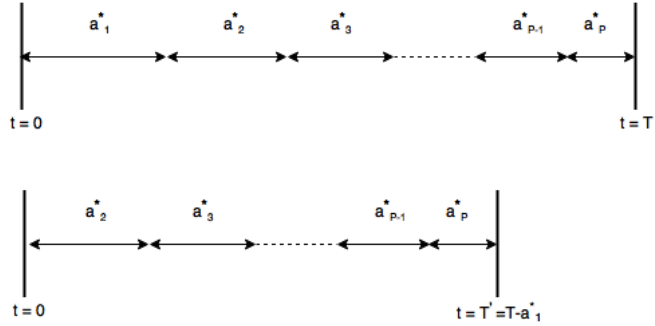}
		\caption{\sl Illustration of the construction used in the induction step in proof of Theorem \ref{thm:main}. \label{fig:proof}} 
	\end{center}
\end{figure}

Since $A_{k}$ has $k$ packets and $a_{i}$'s are decreasing, $A^*_{k}\in \Delta_{D}(k)$. Therefore, from Lemma \ref{lem2} and Lemma \ref{lem:OPTdectime},
\begin{eqnarray}\label{eq:kontime}
\mathsf{ON}\left(A^*_{k}\right)&=&\left(t_{i}(k)\right)_{i=0}^{k-1},\\\label{eq:kOPTtime}
\mathsf{OPT}\left(A^*_{k}\right)&=&\left(d_{i}(k)\right)_{i=0}^{k-1},
\end{eqnarray}
respectively, where more importantly, the corresponding 
\begin{equation}\label{eq:timeeqON}
t_{i+1}(k+1) = t_i(k), i=0,\dots,k-1,
\end{equation} in \eqref{eq:k+1ontime} and \eqref{eq:kontime}, and 
\begin{equation}\label{eq:timeeqOPT}d_{i+1}(k+1) = d_i(k), i=0,\dots,k-1,
\end{equation} in \eqref{eq:k+1OPTtime} and \eqref{eq:kOPTtime}. That is, 
the transmission times to transmit the last $k$ packets of $A_{k+1}^*$ used by both the $\mathsf{ON}$ and 
the $\mathsf{OPT}$ are identical to the transmission times of $k$ packets of  $A_k^*$, respectively. This is the key step of the proof to proceed via induction, that is made possible via Lemma \ref{lem2} and Lemma \ref{lem:OPTdectime} as a result of the worst case input arrival sequence belonging to $\Delta_D$.
Using the induction hypothesis, we have that the competitive ratio of $\mathsf{ON}$ for $A_k^*$ with $k$ packets is bounded by 
 $1+\log\left(k\right)$. 

Now consider the competitive ratio of $\mathsf{ON}$ for $A_{k+1}^*$ with $k+1$ packets, i.e.,
\begin{eqnarray*}
\mu_{\mathsf{ON}}\left(A^*_{k+1}\right)&=&\frac{\sum\limits _{i=0}^{k}f\left(t_{i}(k+1)\right)}{\sum\limits _{i=0}^{k}f\left(d_{i}(k+1)\right)},
\end{eqnarray*}
which on expanding gives
\begin{eqnarray}\label{eq:dummy1}\mu_{\mathsf{ON}}\left(A^*_{k+1}\right)&=&\frac{f\left(t_{0}(k+1)\right)+\sum\limits _{i=1}^{k}f\left(t_{i}(k+1)\right)}{\sum\limits _{i=0}^{k}f\left(d_{i}(k+1)\right)}.
\end{eqnarray}
Now from \eqref{eq:timeeqON} and \eqref{eq:timeeqOPT}, recall that, $t_{i+1}(k+1) = t_i(k)$ for $i=0,\dots, k-1$, and 
$d_{i+1}(k+1) = d_i(k)$ for $i=0,\dots, k-1$. Hence the corresponding energy functions are also identical, i.e. 
$f(t_{i+1}(k+1)) = f(t_{i}(k))$ and $f(d_{i+1}(k+1) )= f(d_{i}(k))$, for $i=0,\dots, k-1$. Therefore, from \eqref{eq:dummy1}, we have $\mu_{\mathsf{ON}}\left(A^*_{k+1}\right)$
\begin{eqnarray}\nn
&= &\frac{f\left(t_{0}(k+1)\right)+\sum\limits _{i=0}^{k-1}f\left(t_{i}(k)\right)}{f\left(d_{0}(k+1)\right) + \sum\limits _{i=0}^{k-1}f\left(d_{i}(k)\right)},\\ \label{eq:dummy2}
&\leq &\frac{f\left(t_{0}(k+1)\right)+\left(1+\log\left(k\right)\right)\sum\limits _{i=0}^{k-1}f\left(d_{i}(k)\right)}{f\left(d_{0}(k+1)\right) + \sum\limits _{i=0}^{k-1}f\left(d_{i}(k)\right)},
\end{eqnarray}
where the inequality follows from the induction hypothesis that  states that $\mathsf{ON}(A_k^*) \le 1+\log\left(k\right)$. Hence, rewriting \eqref{eq:dummy2},
\begin{eqnarray*}
\mu_{\mathsf{ON}}\left(A^*_{k+1}\right)&=&\left(1+\log\left(k\right)\right) + \\
&&\frac{f\left(t_{0}(k+1)\right)-\left(1+\log\left(k\right)\right)f\left(d_{0}(k+1)\right)}{f\left(d_{0}(k+1)\right) +\sum\limits _{i=0}^{k-1}f\left(d_{i}(k)\right)},\\
&\leq &\left(1+\log\left(k\right)\right)+\frac{f\left(t_{0}(k+1)\right)}{\sum\limits _{i=0}^{k}f\left(d_{i}(k+1)\right)},\\
&\overset{(a)}{\leq}&\left(1+\log\left(k\right)\right)+\frac{f\left(\frac{T}{k+1}\right)}{(k+1)f\left(\frac{T}{k+1}\right)},\\
&=&1+\log\left(k\right)+\frac{1}{k+1},\\
&\overset{(b)}{\leq}&1+\log\left(k+1\right),
\end{eqnarray*}
where in $(a)$ the numerator follows since $t_0(k+1) = \frac{T}{k+1}$ from Lemma \ref{lem2} because of $a_i \ge a_{i+1}$, and the denominator follows from the convexity of $f(.)$, $\frac{1}{k+1}\sum\limits _{i=1}^{k+1}f\left(d_{i}(k+1)\right)\geq f\left(\frac{T}{k+1}\right)$, whereas, (b) follows from the fact that,
$\underset{k}{\overset{k+1}{\int}}\frac{1}{x}dx\geq\frac{1}{k+1}\left(k+1-k\right)=\frac{1}{k+1}$,
$ \log\left(k+1\right)-\log\left(k\right)\geq\frac{1}{k+1}$.
\end{proof}

{\it Discussion:} In this section, we proposed a simple online algorithm that assumes that the future packets arrive at equal time intervals and derived its competitive ratio. Since no information is available about the packet arrival times, it is a natural strategy. We first showed that the worst case input sequence for this algorithm is when the inter-arrival times are decreasing, in which case the algorithm has to idle for some time at the end of each packet transmission. This result was key in deriving the competitive ratio of this algorithm and show that it scales logarithmic in the number of packets, and is independent of the common deadline. 

To the best of our knowledge our theoretical bound on the competitive ratio without assuming anything about the inter-arrival times is the first such result. 
To complete the characterization of online algorithms for this classical packet scheduling problem, a matching lower bound on the competitive ratio would have been useful. However, currently that is beyond the scope of this paper and it is unclear whether $1+\log P$ is the best competitive ratio or not. For similar scheduling and load balancing problems \cite{azar1994line, aspnes1997line, gobel2014online}, the best (theoretically) known competitive ratios also scale logarithmically in the quantity of interest, e.g. number of users/packets, etc.
In the next section, we consider a more general framework, where an additional renewable energy source is available and the objective is to minimize the use of grid energy.

\section{Grid + Energy Harvesting}
In this section, we generalize the packet scheduling problem when there are two sources of energy; conventional (grid) and renewable (EH). The EH energy is stored in a battery, and replenished at each subsequent energy arrival subject to the battery constraints.\footnote{We assume that the battery capacity is large enough and it never overflows.} Once again the object of interest is to minimize the use of grid energy in transmitting  the $P$ packets within common deadline time $T$, but now in the presence of the EH source, thereby exploiting as much EH energy as possible. 

Similar to Remark \ref{rem:unboundedcr}, one can show that if the EH energy arrival epochs and amounts are arbitrary, then the competitive ratio of any online algorithm will be arbitrarily large. For example, if large amount of renewable energy arrives close to the deadline time of $T$, then any online algorithm may not use all of that energy, while the optimal offline algorithm will, making the competitive ratio large.

Thus, we restrict ourselves to the case when the amount of EH energy that arrives at any time $t$ is a random variable that is identically distributed across time, but whose distribution may or may not be known ahead of time to the online algorithm. To exploit the EH energy, 
we propose a natural greedy extension of 
$\mathsf{ON}$, call it $\text{EH}-\mathsf{ON}$, that uses as much EH energy as possible while following the power profile of the earlier proposed online algorithm $\mathsf{ON}$.  As before, the information about energy arrivals and packet arrival times is revealed causally. 

\subsection{Online Algorithm $\text{EH}-\mathsf{ON}$}
The transmission time set by the proposed online algorithm $\text{EH}-\mathsf{ON}$ with EH is identical to the online algorithm $\mathsf{ON}$ without EH. Therefore, the power profile (the power transmitted at any time) of $\text{EH}-\mathsf{ON}$ is identical to that of the $\mathsf{ON}$ algorithm. The only non-trivial decision to make is: which energy source to use at each time to support the power profile set by $\mathsf{ON}$. For that purpose, with $\text{EH}-\mathsf{ON}$, the transmitter follows a greedy policy and uses the renewable energy from the battery for as long as possible to support the power profile of 
$\mathsf{ON}$. The transmitter disconnects from the battery only when there is no energy in the battery and switches over to the grid.

Let $t_{i}$ denote the transmission time of the $i^{th}$ packet from $\mathsf{ON}$, and let $R_{i}$ denote the power (energy/time) used to transmit the $i^{th}$ packet by $\mathsf{ON}$. Let $n(i)$ be the number of renewable energy arrival instants during the transmission of the $i^{th}$ packet, i.e. within time interval $[s_i, f_i]$ set by the $\mathsf{ON}$ algorithm, where the $j^{th}$ EH energy arrival instant happens at time $\tau_{ij}$ with amount $E_{ij},j=1, \dots, n(i)$. Let $E_{i}^{\text{idle}}$ be the total EH energy arrived after the transmission of $i^{th}$ packet and before the start of the transmission of $(i+1)^{th}$ packet. Let $B_{i}$ represent the total energy present in the battery at the start of the transmission of $i^{th}$ packet. Let $B_{\text{max}}$ be the energy capacity of the battery.
Algorithm \text{EH}-$\mathsf{ON}$ describes how the renewable energy is used. The basic idea of this algorithm is to use renewable energy as quickly as possible and for as long as possible, to minimize the grid energy, where $G_i$ represents the grid energy used to transmit packet $i$. The algorithm describes when to use the EH energy and the grid energy, respectively.

\begin{algorithm}

\label{algo3}

	\caption{\text{EH}-$\mathsf{ON}$}
	initialize $t_{0}>>\frac{T}{P}$, $e_{0}=E_{0}^{\text{idle}}=0$, $G_{0}=f\left(t_{0}\right)$\;
	$E_{ik}$ : $k^{th}$ energy arrival during $i^{th}$ packet transmission\;
    $n(i)$ : total number of energy arrivals during $i^{th}$ packet transmission\;
    $E_{i}^{\text{idle}}$ : total harvested energy that arrives after the completion of $i^{th}$ packet but before the start of transmission of the $\left(i+1\right){}^{th}$ packet, its $0$ if there is no idling time\;
	\For(){$i := 1$ to $P$}
   	{
	$t_{i}=\min\left(t_{i-1},\frac{T-\sum\limits^{i-1}_{l=1}a_{l}}{P-i+1}\right)$\;
	Use power $R_{i}=\frac{f\left(t_{i}\right)}{t_{i}} $ to transmit packet $i$\; 
	$B_{i}=\min\left(B_{\text{max}},\, B_{i-1}+E_{i-1}^{\text{idle}}+G_{i-1}-f\left(t_{i-1}\right)\right)$\;
	\For(){$j := 1$ to $n(i)$}
	{
		$B=\max\left(B_{\text{max}},\,\sum\limits _{k=1}^{j-1}E_{ik}+B_{i}\right)$\;
		$e_{j}=\max\left(e_{j-1},\,\tau_{i,j}R_{i}-B\right)$\;
		$w=\frac{\left(\tau_{i,j}-\tau_{i,j-1}\right)R_{i}-\left(e_{j}-e_{j}-1\right)}{R_{i}}$\;
		Use EH source in time interval $\left[\tau_{i,j-1},\,\tau_{i,j-1}+w\right)$\;
		Use Grid energy source in time interval $\left[\tau_{i,j-1}+w,\,\tau_{i,j}\right]$\;
	}
	$B=\max\left(B_{\text{max}},\,\sum\limits _{k=1}^{n(i)}E_{ik}+B_{i}\right)$\;
	$G_{i}=\max\left(e_{n(i)},\, f\left(t_{i}\right)-B \right)$\;
    }
\end{algorithm}


We next show that under some natural assumptions on the EH arrival process, we can show that the $\text{EH}-\mathsf{ON}$ algorithm has a competitive ratio of $c(1+\log P)$, where $c$ is a constant. So essentially, both the $\text{EH}-\mathsf{ON}$ and the $\mathsf{ON}$ have competitive ratios that scale identically in the number of packets $P$. 

Consider the optimal offline algorithm with EH, $\text{EH}-\mathsf{OPT}$. Clearly, for any packet inter-arrival sequence $A$, the total energy (grid + EH) used by 
$\text{EH}-\mathsf{OPT}$ is $E^A_{\text{EH}-\mathsf{OPT}} = E^A_{\mathsf{OPT}}$, where $E^A_{\mathsf{OPT}}$ is the total energy needed by the optimal offline algorithm $\mathsf{OPT}$ in the no EH case. Let the optimal grid energy that $\text{EH}-\mathsf{OPT}$ uses be $G^A_{\text{EH}-\mathsf{OPT}}$.

Let the sum of all the EH energy that arrives in time interval $[0, T/2]$ be $E_{\ell}$, and that arrives in time interval $[T/2, T]$ be $E_{r}$, respectively. Then the following remark is in order.
\begin{remark}\label{rem:nondeg} Let all the EH energy $E_{\ell} + E_{r}$ that actually arrives over several instants in interval $[0,T]$ be made available to $\text{EH}-\mathsf{OPT}$ at time $t=0$ itself. Then  
it follows that 
the grid energy used by the $\text{EH}-\mathsf{OPT}$ is lower bounded by $E^A_{\mathsf{OPT}} - (E_{\ell} + E_{r})$.

To keep the competitive ratio non-trivial, for a fixed packet inter-arrival time sequence $A$, we have to assume that for any realization of EH energy arrivals, 
$$E^A_{\mathsf{OPT}} - (E_{\ell} + E_{r}) > 0.$$ Equivalently this condition implies that only EH energy is not sufficient for the optimal offline algorithm to transmit all the $P$ packets, even if all the EH energy is available at time $0$. If this condition is violated, then any online algorithm cannot be competitive.
\end{remark}

\begin{assumption}\label{asm:1}Let $\eta=\bbE\left(E_{\ell}\right)=\bbE\left(E_{r}\right)$. We assume that 
\begin{equation}\label{eq:asm1}
\eta\leq\frac{(m-1)E^A_{\mathsf{OPT}}}{2m},
\end{equation} for some constant $m>1$ and any $A$. It is a reasonable assumption since the amount of EH energy arriving at any time does not depend on the number of packets $P$, while the total energy needed (grid + EH) $E_{\mathsf{OPT}}$ is increasing in $P$. Since $P$ is typically large,  it is safe to make this assumption. It is also important to make this assumption, since otherwise the actual grid energy used by the optimal offline algorithm $$E^A_{\mathsf{OPT}} - (E_{\ell} + E_{r})$$ can be arbitrarily small, making the online algorithm have arbitrarily bad competitive ratio.
\end{assumption}

\begin{theorem}Under Assumption \ref{asm:1}, the competitive ratio of $\text{EH}-\mathsf{ON}$ is upper bounded by $m(1+\log P)$, for smallest $m>1$ that satisfies Assumption \ref{asm:1}.
\end{theorem}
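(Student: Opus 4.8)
The plan is to bound the \emph{expected} grid energy of $\text{EH}-\mathsf{ON}$ from above and the expected grid energy of $\text{EH}-\mathsf{OPT}$ from below, and then take the ratio. Since in the hybrid setting the competitive ratio is measured only against grid energy consumption, I would take it to be $\mu_{\text{EH}-\mathsf{ON}}=\max_{A}\bbE(G^A_{\text{EH}-\mathsf{ON}})/\bbE(G^A_{\text{EH}-\mathsf{OPT}})$, where the expectation is over the (identically distributed) EH arrivals. The two bounds can then be combined into the single target $m(1+\log P)$.

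For the numerator, the key observation is that $\text{EH}-\mathsf{ON}$ is defined to use exactly the same transmission times, and hence the same power profile and the same \emph{total} energy, as $\mathsf{ON}$. Any portion of this total that is supplied by the battery only reduces the grid demand, so for every realization of the EH arrivals we have $G^A_{\text{EH}-\mathsf{ON}}\le E^A_{\mathsf{ON}}$ pointwise, and therefore $\bbE(G^A_{\text{EH}-\mathsf{ON}})\le E^A_{\mathsf{ON}}$. Invoking Theorem \ref{thm:main} then yields $\bbE(G^A_{\text{EH}-\mathsf{ON}})\le (1+\log P)\,E^A_{\mathsf{OPT}}$, where $E^A_{\mathsf{OPT}}$ is the total energy of the no-EH optimum (equal to the total energy of $\text{EH}-\mathsf{OPT}$).

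For the denominator, I would start from Remark \ref{rem:nondeg}, which states that even if all the harvested energy were handed to $\text{EH}-\mathsf{OPT}$ at time $t=0$, its grid energy still satisfies $G^A_{\text{EH}-\mathsf{OPT}}\ge E^A_{\mathsf{OPT}}-(E_{\ell}+E_{r})$. Taking expectations and using $\bbE(E_{\ell})=\bbE(E_{r})=\eta$ gives $\bbE(G^A_{\text{EH}-\mathsf{OPT}})\ge E^A_{\mathsf{OPT}}-2\eta$. Applying Assumption \ref{asm:1}, namely $\eta\le (m-1)E^A_{\mathsf{OPT}}/(2m)$, we get $2\eta\le (m-1)E^A_{\mathsf{OPT}}/m$, and hence $\bbE(G^A_{\text{EH}-\mathsf{OPT}})\ge E^A_{\mathsf{OPT}}/m$. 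Combining with the numerator bound,
$$\frac{\bbE(G^A_{\text{EH}-\mathsf{ON}})}{\bbE(G^A_{\text{EH}-\mathsf{OPT}})}\le \frac{(1+\log P)\,E^A_{\mathsf{OPT}}}{E^A_{\mathsf{OPT}}/m}=m(1+\log P),$$
and maximizing over $A$ completes the argument.

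The main (indeed the only) place where care is needed is the denominator step. Remark \ref{rem:nondeg} furnishes an adversary-optimal lower bound by pessimistically assuming that \emph{all} EH energy is available to $\text{EH}-\mathsf{OPT}$ at time $0$, and Assumption \ref{asm:1} is precisely what prevents this lower bound from collapsing toward zero, which would otherwise let the ratio blow up. The numerator step is essentially free: it throws away all of the harvested-energy savings that $\text{EH}-\mathsf{ON}$ actually realizes through the battery bookkeeping in Algorithm \ref{algo3}. A finer accounting of that battery usage could only shrink the constant $m$ in front, but no such refinement is required to reach the stated bound $m(1+\log P)$.
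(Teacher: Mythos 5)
Your proof is correct, and it reaches the stated bound by a strictly leaner route than the paper's own argument. The shared skeleton is the same: upper-bound the expected grid energy of $\text{EH}-\mathsf{ON}$, lower-bound that of $\text{EH}-\mathsf{OPT}$ via Remark \ref{rem:nondeg} plus Assumption \ref{asm:1}, and take the ratio of expectations. The difference is entirely in the numerator. The paper spends most of its proof establishing that the greedy battery policy banks a guaranteed fraction of the harvested energy: it argues $E^A_{\mathsf{ON},\ell}\le E^A_{\mathsf{ON},r}$ (the $\mathsf{ON}$ schedule spends at least as much energy in $[T/2,T]$ as in $[0,T/2]$), combines this with $E^A_{\mathsf{ON},\ell}+E^A_{\mathsf{ON},r}\ge E_{\ell}+E_{r}$ to get $E^A_{\mathsf{ON},r}\ge E_{\ell}/2$, and uses greediness to conclude that $\text{EH}-\mathsf{ON}$ consumes at least $E_{\ell}/2$ of EH energy, so its grid energy is at most $E^A_{\mathsf{ON},\ell}+E^A_{\mathsf{ON},r}-E_{\ell}/2$; this tighter numerator is then pushed through an awkward algebraic step (multiplying and dividing by $1+\log P$ and subtracting $\tfrac{1}{2}$) to land on $m(1+\log P)$. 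You instead discard the harvested-energy credit entirely, using only $G^A_{\text{EH}-\mathsf{ON}}\le E_{\mathsf{ON}}(A)\le(1+\log P)E^A_{\mathsf{OPT}}$ from Theorem \ref{thm:main}, and observe that Assumption \ref{asm:1} is exactly the statement $E^A_{\mathsf{OPT}}-2\eta\ge E^A_{\mathsf{OPT}}/m$, so the factor $(1+\log P)$ cancels cleanly with no further algebra. What the paper's extra work buys is a strictly smaller intermediate bound, $\left((1+\log P)E^A_{\mathsf{OPT}}-\eta/2\right)/\left(E^A_{\mathsf{OPT}}-2\eta\right)$, i.e., quantitative slack below $m(1+\log P)$, and it is the only place where the greedy nature of the battery policy is actually exploited. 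What your route buys is brevity and a sharper structural observation: the guarantee $m(1+\log P)$ holds for \emph{any} energy-sourcing rule that keeps the $\mathsf{ON}$ transmission schedule, greedy or not; greediness only improves the hidden slack, and is not needed for the theorem as written. One small quibble: your closing remark that finer battery accounting ``could only shrink the constant $m$'' is slightly off, since $m$ is pinned down by Assumption \ref{asm:1}; the finer accounting lowers the bound below $m(1+\log P)$ rather than changing $m$ itself.
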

\begin{proof} From Remark \ref{rem:nondeg}, recall that the total energy $E_{\text{EH}-\mathsf{OPT}}$ used by $\text{EH}-\mathsf{ON}$ is such that $E_{\text{EH}-\mathsf{OPT}} > E_{\ell} + E_{r}$, since otherwise the competitive ratio can be unbounded. 

Let the energy that the algorithm $\mathsf{ON}$ (Section \ref{sec:ON}) uses without any EH energy for packet inter-arrival time sequence $A$ in time interval $[0, T/2]$, be $E^A_{\mathsf{ON},\ell}$ and in time interval $[T/2, T]$ be $E^A_{\mathsf{ON}, r}$, respectively. By the definition of $\mathsf{ON}$, it is easy to follow that for any $A$, $E^A_{\mathsf{ON},\ell} \le E^A_{\mathsf{ON},r}$, since otherwise we can increase the packet transmission times while decreasing the overall energy requirement. 

Recall that the total energy used by $\text{EH}-\mathsf{ON}$ is same as the total energy used by $\mathsf{ON}$ to transmit all the $P$ packets, only $\text{EH}-\mathsf{ON}$ sources some of its energy requirement from the EH source.
Therefore,  from the optimality of $\text{EH}-\mathsf{OPT}$, $$E^A_{\mathsf{ON},\ell} + E^A_{\mathsf{ON},r} \ge 
E_{\text{EH}-\mathsf{OPT}}.$$ 

Hence it follows that 
\begin{eqnarray*}
E^A_{\mathsf{ON},\ell} + E^A_{\mathsf{ON},r} &\ge& E_{\ell} + E_{r},\\
2 E^A_{\mathsf{ON},r} &\stackrel{(a)}\ge& E_{\ell} + E_{r},\\
E^A_{\mathsf{ON},r} &\ge& \frac{E_{\ell} + E_{r}}{2},
\end{eqnarray*}
where $(a)$ follows since $E^A_{\mathsf{ON},\ell} \le E^A_{\mathsf{ON},r}$.
In particular, $$E^A_{\mathsf{ON},r} \ge \frac{E_{\ell}}{2},$$ which implies that the amount of energy used by $\text{EH}-\mathsf{ON}$ in time-interval $[T/2, T]$ is at least half the energy that arrives in interval $[0, T/2]$. Since $\text{EH}-\mathsf{ON}$  is a greedy algorithm in terms of using the EH energy, and all of $E_{\ell}$ is available at time $t=T/2$, it follows that $\text{EH}-\mathsf{ON}$ uses at least $E_{\ell}/2$ amount of EH energy by the deadline $T$. Therefore the grid energy $G_{\mathsf{ON}}$ used by $\text{EH}-\mathsf{ON}$ is at most $E^A_{\mathsf{ON},\ell} + E^A_{\mathsf{ON},r} - E_{\ell}/2$.

Moreover, we know that the grid energy used by $\text{EH}-\mathsf{OPT}$ is at least  $E^A_{\mathsf{OPT}} - (E_{\ell} + E_{r})$ which by definition is positive. Since the 
EH energy arrival process has identical distribution across time, we have that 
\begin{equation}\label{eq:iid}
\bbE\{E_{\ell} + E_{r}\} = 4\bbE\{E_{\ell}/2\}
\end{equation}

 Therefore, the expected competitive ratio for $\text{EH}-\mathsf{ON}$
\begin{eqnarray}\nn
\mu &=&  \frac{\bbE\{G_{\mathsf{ON}}\}}{\bbE\{G_{\text{EH}-\mathsf{OPT}}\}}, \\\label{eq:ehcr}
\mu &\le& \frac{ E^A_{\mathsf{ON},\ell} + E^A_{\mathsf{ON},r} - \bbE\{E_{\ell}/2\} }{E^A_{\mathsf{OPT}} - \bbE\{E_{\ell} + E_{r}\}}.
\end{eqnarray}

From Assumption \ref{asm:1},
\begin{equation}\label{eq:asm1}
\eta\leq\frac{(m-1)E^A_{\mathsf{OPT}}}{2m}
\end{equation} for some constant $m$ and any $A$.  Therefore, multiplying and dividing by $(1+\log P)$ and subtracting $1/2$ from the denominator in \eqref{eq:asm1}, we get
\begin{eqnarray}
\eta\leq\frac{(m-1)(1+\log P)E^A_{\mathsf{OPT}}}{2m(1+\log P)-\frac{1}{2}},
\end{eqnarray}
which on rewriting is equivalent to 
\begin{equation}\label{eq:dummy}
\frac{\left(1+\log P\right)E^A_{\mathsf{OPT}}-\frac{\eta}{2}}{E^A_{\mathsf{OPT}}-2\eta}\leq m\left(1+\log P\right).
\end{equation}
From Theorem \ref{thm:main}, we know that $\frac{E^A_{\mathsf{ON},\ell} + E^A_{\mathsf{ON},r}}{E^A_{\mathsf{OPT}}} \le 1+\log P$, and hence the RHS of \eqref{eq:dummy} is equal to LHS of \eqref{eq:ehcr}, and we get that for some constant $m > 1$, 
\begin{eqnarray}\label{eq:ehcrfinal}
\mu \le m\left(1+\log P\right).
\end{eqnarray}
{\it Discussion:} In this section, we considered the case when energy from both the grid and an EH source is available. In this scenario, the biggest challenge for any online algorithm is to ensure that enough EH energy is used up and the leftover EH energy is minimized, since the optimal offline algorithm is going to completely use up all the EH energy.  To keep the competitive ratio non-trivial, we assumed that the total energy arriving from the EH source is not too large and the optimal offline algorithm has to use 'significant' grid energy to transmit all the $P$ packets. 

To ensure that enough EH energy is used, we proposed a greedy extension of the ${\mathsf{ON}}$ algorithm that uses the same power transmission profile and transmission times as prescribed by the ${\mathsf{ON}}$ algorithm, and sources its energy requirement from the EH source as long as possible, otherwise uses the grid energy source.
Under this assumption, we showed that the new online algorithm $\text{EH}-{\mathsf{ON}}$ at least uses half of the EH energy arriving in the first half of the deadline time, while the optimal algorithm can at most use all the energy that arrives till the deadline time. Since the EH energy arrivals are identically distributed across time, this allows us to reuse the competitive ratio bound that we derived on the ${\mathsf{ON}}$ algorithm to show that the competitive ratio of 
$\text{EH}-{\mathsf{ON}}$ is at most some constant times $(1+\log P)$. Thus, both the ${\mathsf{ON}}$ and its greedy extension have the same scaling in the competitive ratio as a function of the number of packets. 

\end{proof}

\section{Simulations} In this section, we provide numerical results to better understand the competitive ratio of 
$\mathsf{ON}$. For all simulations without energy harvesting, we assume that the packet inter-arrival times 
$a_i$ are exponentially distributed with mean $T/P$, and $T=100$ secs, packet size $B=200$kb and the number of packets are taken to be $200$. Moreover, noise power spectral density is taken to be $10^{-19}$ Watt/Hz and bandwidth $=1$MHz. In Fig. \ref{fig:noEHcompratio}, we plot the (simulated) competitive ratio of $\mathsf{ON}$ together with the theoretical bound of $1+\log P$. We see that the competitive ratio of $\mathsf{ON}$ is close to $1$ (optimal), and much smaller than the  theoretical bound of $\log P$. 

In Fig. \ref{fig:worstcaseP}, we also plot the competitive ratio for the worst case sequence $a_i \ge a_{i+1}$ as a function of $P$, where for each value of $P$, the worst values of $a_i$ are found via optimization. In particular, we start with inter-arrival times 
$a_i$ are exponentially distributed with mean $T/P$, and then steer the inter-arrival times in the direction of increasing the competitive ratio via gradient descent algorithms.
 We restrict to small values of $P$, since otherwise the optimization for finding the worst case $a_i$ is prohibitive. Even for this case, the competitive ratio of $\mathsf{ON}$ is fairly close to $1$.
 
In Figs. \ref{fig:noEHdeadline} and \ref{fig:noEHpacketsizeB}, we plot the competitive ratio performance of $\mathsf{ON}$ while varying the deadline times $T$ and packet sizes $B$, together with the theoretical upper bound. From all the figures it is clear that $\mathsf{ON}$ performs very close to the optimal.

For the hybrid energy arrival scenario, we assume that the packet inter-arrival times 
$a_i$ are exponentially distributed with mean $T/P$, EH energy inter-arrival epochs are exponentially distributed with mean $T/(N+1)$, where $N=20$ (if not varied) is the total number of EH epochs. Moreover, the amount of energy arrival at each EH epoch is also exponentially distributed with mean $10$ mJ. Once again we use $T=100$ secs, while larger packet size of $B=500$ kb and the number of packets are taken to be $400$. In Fig. \ref{fig:EHP}, we plot the (simulated) competitive ratio of 
$\mathsf{\text{EH}-ON}$ together and observe that similar $\mathsf{ON}$ it is very close to the optimal. In Fig. 
\ref{fig:EHavgenergy} we plot the competitive ratio of 
$\mathsf{\text{EH}-ON}$ as a function of average energy harvested.

\begin{figure}[h]
	\begin{center}
		\includegraphics[height=2.75in]{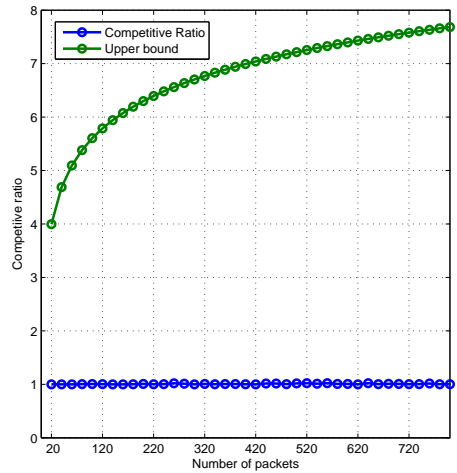}
		\caption{\sl Competitive ratio of the $\mathsf{ON}$ with different number of packets $(P)$ and the theoretical upper bound. \label{fig:noEHcompratio}} 
	\end{center}
\end{figure}

\begin{figure}[h]
	\begin{center}
		\includegraphics[height=2.75in]{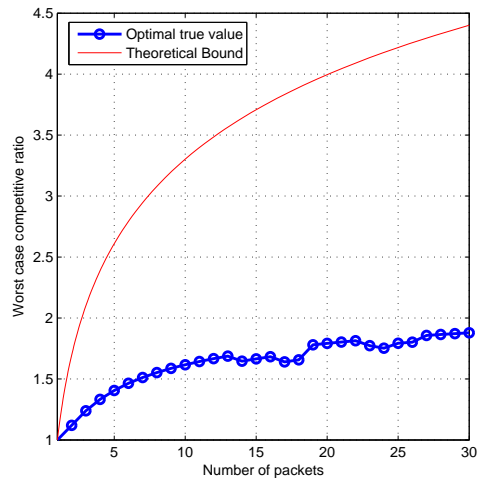}
		\caption{\sl Competitive ratio of the $\mathsf{ON}$ with different number of packets $(P)$ under worst case input of inter-arrival time $a_i > a_{i+1}$. \label{fig:worstcaseP}} 
	\end{center}
\end{figure}

\begin{figure}[h]
	\begin{center}
		\includegraphics[height=2.75in]{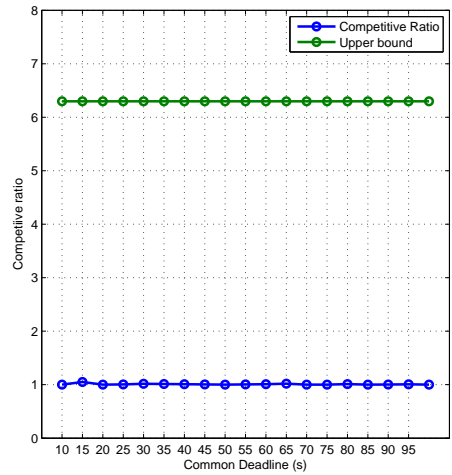}
		\caption{\sl Competitive ratio of the $\mathsf{ON}$ with different deadline times $(T)$ and the theoretical upper bound. \label{fig:noEHdeadline}} 
	\end{center}
\end{figure}

\begin{figure}[h]
	\begin{center}
		\includegraphics[height=2.75in]{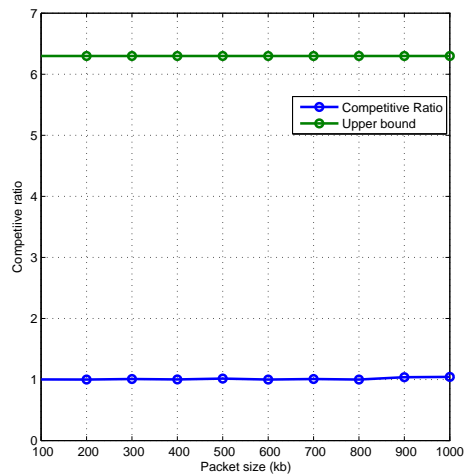}
		\caption{\sl Competitive ratio of the $\mathsf{ON}$ with different packet sizes $(B)$ and the theoretical upper bound.. \label{fig:noEHpacketsizeB}} 
	\end{center}
\end{figure}

\begin{figure}[h]
	\begin{center}
		\includegraphics[height=2.75in]{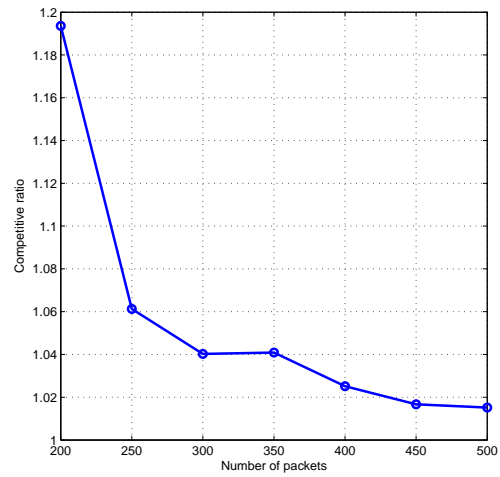}
		\caption{\sl Competitive ratio of the $\mathsf{\text{EH}-ON}$ with different number of packets $P$. \label{fig:EHP}} 
	\end{center}
\end{figure}

\begin{figure}[h]
	\begin{center}
		\includegraphics[height=2.75in]{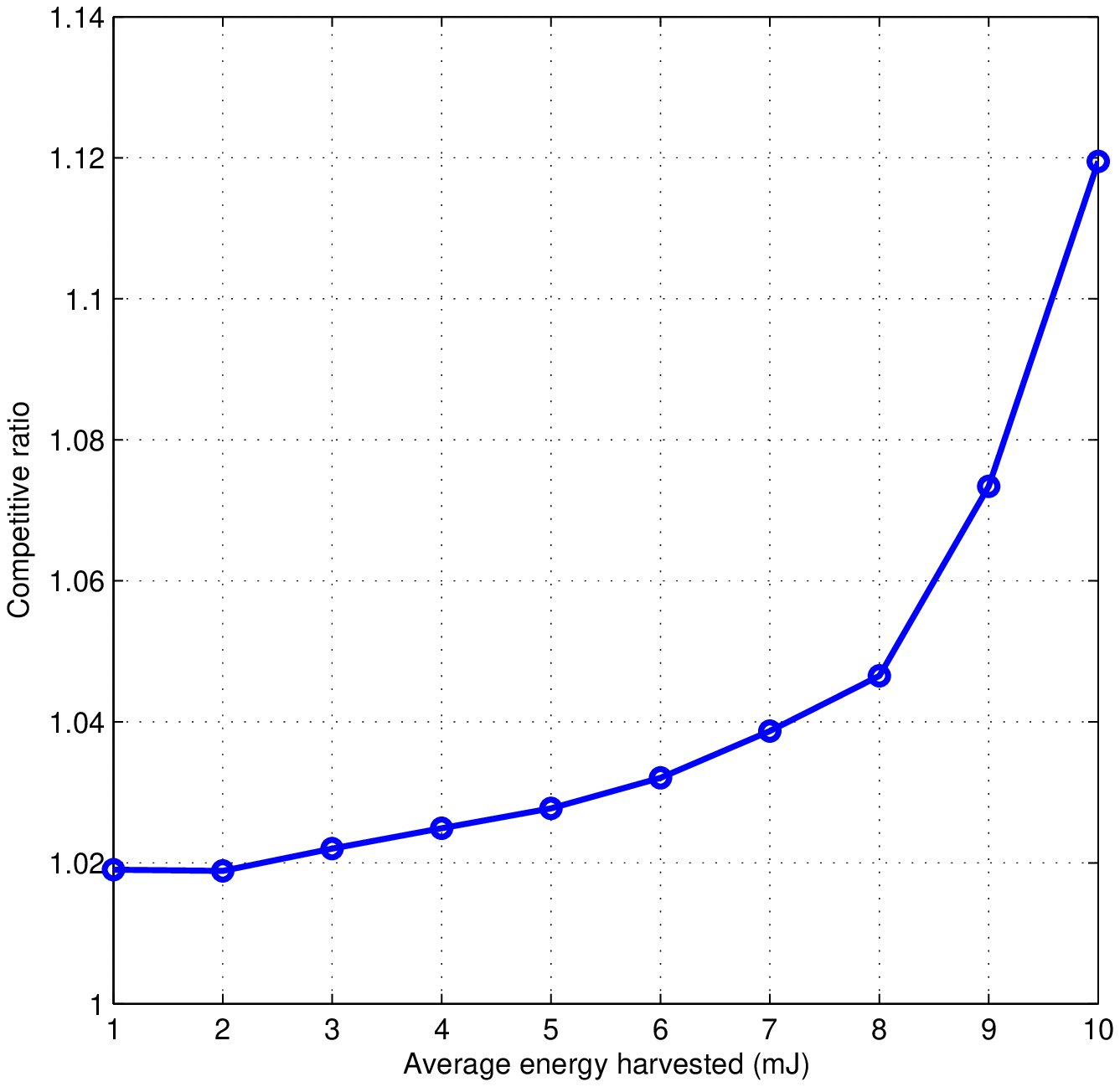}
		\caption{\sl Competitive ratio of the $\mathsf{\text{EH}-ON}$ with different average energy harvested. \label{fig:EHavgenergy}} 
	\end{center}
\end{figure}

\section{Conclusions} In this paper, we considered the online setting of a classical problem of minimizing energy for transmitting multiple packets given a common deadline, without making any assumptions on the packet inter-arrival times. We showed that even for this most general input model, the proposed algorithm $\mathsf{ON}$, has a competitive ratio that only grows logarithmically with the number of packets and is independent of the common deadline. The simulated performance of the proposed algorithm is far better than the theoretically guaranteed performance, and for most cases it is very close to the optimal. Thus, a natural question that remains open is : whether the competitive ratio analysis of the $\mathsf{ON}$ can be tightened to show that it is a constant, or can a lower bound be derived that shows that no online algorithm can have competitive ratio smaller than logarithm of the number of packets. For the hybrid energy case, where both conventional and renewable energies are available, we show that a natural greedy extension of 
$\mathsf{ON}$ has very similar theoretical performance, is and very close to the optimal numerically.

\bibliographystyle{IEEEtran}
\bibliography{bibfile}

\end{document}